\def\BibTeX{{\rm B\kern-.05em{\sc i\kern-.025em b}\kern-.08em
    T\kern-.1667em\lower.7ex\hbox{E}\kern-.125emX}}
\newtheorem{assumption}{Assumption}
\newtheorem{proposition}{Proposition}
\newtheorem*{remark}{Remark}
\newtheorem{theorem}{Theorem}
\newtheorem{lemma}{Lemma}
\begin{document}

\title{Sum-of-Squares Program and Safe Learning On Maximizing the Region of Attraction of Partially Unknown Systems

}

\author{\IEEEauthorblockN{Dongkun Han}
\IEEEauthorblockA{\textit{Department of Mechanical and Automation Engineering} \\
\textit{The Chinese University of Hong Kong}\\
HKSAR, China \\
dkhan@mae.cuhk.edu.hk}
\and
\IEEEauthorblockN{Hejun Huang}
\IEEEauthorblockA{\textit{Department of Mechanical and Automation Engineering} \\
\textit{The Chinese University of Hong Kong}\\
HKSAR, China \\
hjhuang@mae.cuhk.edu.hk}
}

\maketitle

\begin{abstract}
Recent advances in learning techniques have enabled the modelling of unknown dynamical systems directly from data. However, in many contexts, these learning-based methods are short of safety guarantee and strict stability verification. To address this issue, this paper first approximates the partially unknown nonlinear systems by using a learned state space with Gaussian Processes and Chebyshev interpolants. A Sum-of-Squares Programming based approach is then proposed to synthesize a controller by searching an optimal control Lyapunov Barrier function. In this way, we maximize the estimated region of attraction of partially unknown nonlinear systems, while guaranteeing both safety and stability. It is shown that the proposed method improves the extrapolation performance, and at the same time, generates a significantly larger estimated region of attraction.
\end{abstract}

\begin{IEEEkeywords}
Sum of squares, Region of attraction, Control Lyapunov barrier function, Gaussian process
\end{IEEEkeywords}

		\section{Introduction}
		
        Consider a Sum of Squares (SOS) polynomial $p(x)$ as
		\begin{equation}
		    \label{eqn:openning}
		    p(x)=\sum^n_{i=1}f_i^2(x),
		\end{equation}
		\noindent where $f_i(x)$, $i=1,\dots,n$ is a monomial, the nonnegativeness of $p(x)$ could be directly obtained: $\forall x\in \mathbb{R}^n,~p(x)\geq 0$. Checking whether a polynomial is an SOS boils down to a semidefnite program which turns out to be a convex optimization and solvable in polynomial time. This so-called SOS program has been successfully used to regulate the stability and safety of control-affine systems \cite{chesi2011domain,majumdar2013control}. Various SOS related results have been derived with different practical concerns, like sparse computation \cite{zheng2019sparse} and clarity identification \cite{ahmadi2020complexity1}. However, these SOS techniques \cite{jarvis2005control} are not easy to cope with nonpolynomial terms and partial unknown dynamics in real world, such as in fluid mechanics \cite{najm2009uncertainty} and biological systems \cite{papachristodoulou2005analysis}. 
		
		To address above shortcomings, tools, such as SOS robust controller, have been developed to handle parametric uncertainty \cite{dorobantu12SSDM,han2016estimating, han2014tcas1,han19tac,han12tii}. Meanwhile, learning-based methods have strengthened the link between optimizations and dynamics modelling. Many recent approaches exploit Gaussian Process (GP) with limited prior knowledge \cite{umlauft2017learning,berkenkamp2016safe,buisson2020actively}. In contrast to parametric uncertainties, these learning-based methods rarely provide safety guarantees, which limits the range of their real-world applications.
		
		A number of pioneering explorations have been attempted to guarantee the safety of above learning-based methods. Reinforcement learning approaches have obtained a safe exploration inside the ROA \cite{berkenkamp2017safe,jin2018control,cheng2019end} and data-driven learning approaches can provide a rich set of control options at a certain level of probability \cite{wang2018safe,hewing2020learning}. Among these methods, the ROA certified by the Control Lyapunov Function (CLF) is shown to be a useful approach. \cite{ames2019control} define a type of dynamical safety which could be certified by the Control Barrier Function (CBF). Based on this idea, \cite{wang2018permissive} introduces an SOS-based strategy to construct permissive barrier certificates for estimating the safe regions. However, when the model confronts of disturbances or non-polynomial nonlinearity, this method can hardly find a solution. Based on \cite{wang2018permissive,devonport2020bayesian}, our previous work \cite{huang21ccdc} relaxes polynomial restriction and the constraints of polynomial kernel in representations, and proposes a method to approximate the partially unknown nonlinear autonomous systems into a polynomial form.
		
		In contrast to the aforementioned methods, this paper interprets the safety and stability properties in a more general form based on a CLF and CBF certified ROA, which is able to obtain a better estimation result. The main contributions of this paper are threefold. First, for partially unknown systems, we reconstruct a learned control affine system through Gaussian Processes and Chebyshev interpolants. SOS-based sufficient conditions are proposed for the existence of a feasible controller such that the safety and stability of learned systems can both be guaranteed (Theorem 1). Second, solvable conditions have provided for searching an optimal control Lyapunov barrier function without and with considering pre-defined unsafe regions (Theorem 2 and Theorem 3). Finally, an algorithm is developed to search for the optimal controller such that the estimated ROA can be maximized (Section IV). Numerical examples demonstrate that a significantly larger ROA can be obtained by the proposed method (Section V).
		
		\section{Preliminary} \label{sec:prelimary}
		
		Notation: Let $\mathcal{P}$ be the set of polynomials and $\mathcal{P}^\text{SOS}$ be the set of sum of squares polynomials, i.e., $P(x)=\sum_{i=1}^{k}p_i^2(x), $ where $P(x)\in \mathcal{P}^{SOS}$ and $p_i(x)\in\mathcal{P}$. 
		
		Consider a control affine dynamical system as follows,
		\begin{equation}
			\label{eqn:sysstate}
			\begin{aligned}
				\dot{x} &= f(x)+g(x)u+d(x),\\
			\end{aligned}
		\end{equation}
		\noindent where $x \in \mathcal{X} \subset \mathbb{R}^{n}$ and $u \in \mathcal{U} \subset \mathbb{R}^{m}$ denote the state and control input of the system, which has three Lipschitz continuous terms, $f:\mathbb{R}^{n}\rightarrow \mathbb{R}^{n}$ denotes a nonlinear term, $g:\mathbb{R}^{n}\rightarrow \mathbb{R}^{n\times m}$ denotes a polynomial term and $d:\mathbb{R}^{n}\rightarrow \mathbb{R}^{n}$ denotes an unknown term. In this paper, we consider a control input $u$ in a polynomial form \cite{chesi2011domain,papachristodoulou2005analysis}.
		
		Our prior information of the system (\ref{eqn:sysstate}) come from the measurements in the form $(x,u,\dot{x})$. Normally, $\dot{x}$ is a finite-difference approximation term based on the neighbouring $x$ that in itself is not directly measurable, while the unmodeled term $d(x)$ describes all the noise around the estimate of $\dot{x}$, as well as the measurements' inaccuracy of $x$ in practice. The measurements of $d(x)$ can be obtained by subtracting $f(x)+g(x)u$ from $\dot{x}$. 
		
		We assume that $d(x)$ is an independent zero-mean Gaussian noise which exists across all measurements. For simplicity, all components of $d(x)$ assume to have the same variance $\sigma_n^2$.
		\begin{assumption}\label{asm:noisedis}
			The unknown term $d(x)$ is a zero-mean Gaussian noise and uniformly bounded by $\sigma_n$.  $\hfill\square$
		\end{assumption}
		
		We further assume $d(x)$ is bounded in the system (\ref{eqn:sysstate}):
		\begin{assumption}\label{asm:boundnorm}
			The unknown term $d(x)$ in (\ref{eqn:sysstate}) exists a bounded norm in the reproducing kernel Hilbert space, $\forall x\in \mathcal{X}, \Vert d(x)\Vert\leq c_g$, where $c_g$ is a constant. $\hfill\square$
		\end{assumption}	
		
		\subsection{Model Formulation}
		In this subsection, we make dynamical approximations of (\ref{eqn:sysstate}) to obtain a probabilistic polynomial representation by Chebyshev interpolants and Gaussian processes regression, respectively.
		
		\subsubsection{Chebyshev Interpolants}
		
		
		Chebyshev interpolants (CI) provide an important method to capture complex functions via Chebyshev polynomials in $[-1, 1]$. Note that, CI can work in any arbitrary interval $[a,b]$ through the transformation \cite{trefethen2019approximation} as follows,
		
		\begin{equation}
			\begin{aligned}
				\hat{x} = \frac{x-1/2(b+a)}{1/2(b-a)}.
			\end{aligned}
		\end{equation}
		The Chebyshev polynomial of degree $k$ is given as, $\forall i \in [0,k],T_i(x)=\cos(i \arccos(x))$, which satisfies a recursion $T_0(x)=1, T_1(x)=x, T_{i}(x)=2xT_{i-1}(x)-T_{i-2}(x)$. The corresponding value of $k+1$ Chebyshev nodes $x_i=\cos (i\pi / k)$ at the target $f(x)$ allow us to obtain an approximation
		\begin{equation}\label{eqn:cheby}
			f(x)\approx P_k(x)=\sum^k_{i=0}c_i T_i(x),
		\end{equation} 
		\noindent where $c_i=\frac{2}{\pi}\int_{-1}^1(1-x^2)^{-1/2}f(x)T_i(x)dx$. We define that $\xi(x)= f(x)-P_k(x)$ from (\ref{eqn:cheby}) to obtain a bounded remainder of CI shown in the following result. 
		
		\begin{lemma}(Theorem 8.2 of \cite{trefethen2019approximation}) 
			\label{lm:Approximation}
			Let a bounded function $\vert f(x)\vert \leq c_m$ analytic in $[-1,1]$ be analytically containable to the open Bernstein ellipse $E$ where $f(x)$ is a non-polynomial term in (\ref{eqn:sysstate}) and $c_m$ is a constant. Then, the upper bound of the remainders from Chebyshev interpolants $P_k(x)$ of degree $k\geq 0$ satisfies
			\begin{eqnarray}\label{eqn:Chebyshev}
				\begin{aligned}
					\Vert f(x)-P_k(x) \Vert \leq \frac{4c_m  \rho^{-k}}{\rho -1}.
				\end{aligned}
			\end{eqnarray}
			\noindent The Bernstein ellipse $E$ has foci $\pm 1$ and major radius $1+\rho$ for all $\rho \ge 0$ containing the concerned region. $\hfill\square$
		\end{lemma}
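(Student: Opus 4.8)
The plan is to start from the Chebyshev series $f(x)=\sum_{k=0}^\infty a_k T_k(x)$, which converges since $f$ is analytic on $[-1,1]$, and to control the decay of the coefficients $a_k$ using the hypothesis that $f$ continues analytically and boundedly (by $c_m$) to the open Bernstein ellipse $E$. The natural device is the Joukowski map $x=\tfrac12(z+z^{-1})$, which sends the circle $|z|=\rho$ onto the Bernstein ellipse of foci $\pm 1$ and major radius governed by $\rho$; under this map $T_k(x)=\tfrac12(z^k+z^{-k})$, so the Chebyshev coefficients become ordinary Laurent/Fourier coefficients of $f$ pulled back to an annulus, and the approximation problem is reduced to a contour estimate.

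Next I would write each coefficient as a contour integral over $|z|=\rho$ and use $|f|\le c_m$ on $E$ together with the analyticity of $f$ inside $E$ (which lets me deform the contour out to $|z|=\rho$) to obtain $|a_k|\le 2c_m\rho^{-k}$ for every $k\ge 1$. This is the analytic crux of the lemma: the geometric factor $\rho^{-k}$ emerging from the contour radius is exactly what drives the $\rho^{-k}$ in the final bound \eqref{eqn:Chebyshev}, and the boundedness assumption $|f|\le c_m$ on $E$ supplies the constant $c_m$.

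With the coefficient bound in hand, the remaining steps are just summing geometric series. For the truncated series (Chebyshev projection) the remainder is the tail $\sum_{j>k}a_jT_j$, and since $\|T_j\|_\infty\le 1$ I would bound it by $\sum_{j>k}|a_j|\le 2c_m\sum_{j>k}\rho^{-j}=\tfrac{2c_m\rho^{-k}}{\rho-1}$. To pass from the projection to the interpolant $P_k$ that actually appears in \eqref{eqn:cheby}, I would invoke the aliasing identity: the interpolation coefficients equal the true coefficients plus aliased contributions from higher-order terms whose indices are congruent modulo the node count, each again bounded by $2c_m\rho^{-j}$. Summing the genuine tail together with the aliased tail doubles the projection estimate, which is precisely what produces the stated factor $\tfrac{4c_m\rho^{-k}}{\rho-1}$.

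The hard part will be the aliasing bookkeeping rather than the coefficient estimate: one must verify that the full collection of aliased high-order coefficients reorganizes into the same geometric series, so that interpolation costs only an extra factor of two over projection and nothing worse. Once the contour bound $|a_k|\le 2c_m\rho^{-k}$ is established via the Joukowski map, both the projection and the interpolation estimates follow mechanically, so the entire analytic content of the lemma is concentrated in that single contour estimate over $|z|=\rho$.
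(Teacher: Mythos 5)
Your proposal is correct, and it reproduces essentially the argument behind the cited result: the paper itself gives no proof of this lemma (it is quoted directly as Theorem~8.2 of \cite{trefethen2019approximation}), and the proof in that reference is exactly your three steps --- the Joukowski-map contour estimate $|a_k|\le 2c_m\rho^{-k}$, the geometric-tail bound $\tfrac{2c_m\rho^{-k}}{\rho-1}$ for the truncated series, and the aliasing identity that doubles this bound for the interpolant. The only point worth flagging is that your geometric series (and hence the lemma) requires $\rho>1$; the paper's phrasing ``for all $\rho\ge 0$'' is a slip in the statement, not a gap in your argument.
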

		
		
		\noindent Let $\xi(x) \in [ -\frac{4c_m  \rho^{-k}}{\rho -1},\frac{4c_m  \rho^{-k}}{\rho -1}]$ so that an equivalent statement of (\ref{eqn:sysstate}) can be obtained as follows. 
		
		\begin{proposition}
			The partially unknown system (\ref{eqn:sysstate}) can be approximated by the Chebyshev interpolants in a certain region as 
			\begin{equation}\label{eqn:approsys}
				\begin{aligned}
					\dot{x}=P_k(x)+g(x)u+d(x)+\xi(x),
				\end{aligned}
			\end{equation}
			which satisfies Assumption \ref{asm:noisedis} and \ref{asm:boundnorm}, simultaneously.
		\end{proposition}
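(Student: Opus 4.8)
The plan is to establish the two claims in the proposition separately: first that system~(\ref{eqn:sysstate}) admits the stated representation~(\ref{eqn:approsys}), and second that this representation is consistent with Assumptions~\ref{asm:noisedis} and~\ref{asm:boundnorm}. The first part is an exact algebraic identity, while the second amounts to carefully tracking which term each assumption is meant to constrain.

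For the representation, I would begin from the definition $\xi(x) = f(x) - P_k(x)$ introduced just after~(\ref{eqn:cheby}), solve for the non-polynomial drift as $f(x) = P_k(x) + \xi(x)$, and substitute into the right-hand side of~(\ref{eqn:sysstate}). This yields $\dot{x} = P_k(x) + g(x)u + d(x) + \xi(x)$ as an exact equality on the region where the interpolant is constructed; the word ``approximated'' in the statement refers only to the fact that the residual $\xi(x)$ is small, not that~(\ref{eqn:approsys}) is inexact.

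For the assumptions, the key observation is that the stochastic term $d(x)$ is carried over verbatim from~(\ref{eqn:sysstate}) into~(\ref{eqn:approsys}). Consequently Assumption~\ref{asm:noisedis} (zero-mean Gaussian, uniformly bounded by $\sigma_n$) and Assumption~\ref{asm:boundnorm} (bounded RKHS norm, $\Vert d(x)\Vert \le c_g$) continue to hold for $d(x)$ without any modification, since neither the interpolation nor the introduction of $\xi(x)$ alters $d$. It then remains only to certify that the newly appearing deterministic residual $\xi(x)$ is bounded, which follows directly from Lemma~\ref{lm:Approximation}: on the region contained in the Bernstein ellipse $E$ where $f$ is analytic, $\Vert \xi(x)\Vert = \Vert f(x) - P_k(x)\Vert \le 4c_m\rho^{-k}/(\rho-1)$. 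This bounded region is precisely the ``certain region'' referenced in the statement, and it is exactly where both the interpolant and the error estimate remain valid.

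The main obstacle I anticipate is not computational but interpretive: deciding whether ``satisfies Assumption~\ref{asm:noisedis} and~\ref{asm:boundnorm}'' is meant to constrain $d(x)$ alone or the aggregate perturbation $d(x)+\xi(x)$. Since $\xi(x)$ is deterministic, folding it into the noise would violate the zero-mean Gaussian requirement of Assumption~\ref{asm:noisedis}; the clean resolution is to keep $\xi(x)$ as a separate bounded term, so that $d(x)$ retains both assumptions while $\xi(x)$ inherits its own deterministic bound from Lemma~\ref{lm:Approximation}. I would therefore make this separation explicit and state the region restriction (analyticity and containment in $E$) as a standing hypothesis, ensuring the bound $4c_m\rho^{-k}/(\rho-1)$ applies uniformly on the working domain $\mathcal{X}$.
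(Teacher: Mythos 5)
Your proposal is correct, but it resolves the proposition by a different decomposition than the paper uses. The paper's own proof folds the interpolation residual into the noise: it treats the aggregate $d(x)+\xi(x)$ as the new ``unknown term'' and asserts that this combination satisfies both assumptions --- justifying Assumption~\ref{asm:noisedis} with the remark that $\xi(x)$ ``does not obey any distribution,'' and Assumption~\ref{asm:boundnorm} by adding the Lemma~\ref{lm:Approximation} bound on $\xi$ to the bound $c_g$ on $d$. You instead keep $d(x)$ and $\xi(x)$ separate, let $d(x)$ carry both assumptions verbatim, and give $\xi(x)$ its own deterministic bound $4c_m\rho^{-k}/(\rho-1)$. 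Your route is logically cleaner: as you point out, a zero-mean Gaussian plus a nonzero deterministic offset is \emph{not} zero-mean Gaussian, so the paper's claim that $d(x)+\xi(x)$ satisfies Assumption~\ref{asm:noisedis} is, strictly read, false (only the boundedness half of the argument survives exactly). What the paper's aggregation buys, however, is alignment with the downstream machinery: Lemma~\ref{lem:polynomial_model} and the GP regression learn $d(x)+\xi(x)$ as a single term $d_\xi$, so the proposition is there precisely to license treating the combined perturbation as the unknown term of Assumptions~\ref{asm:noisedis} and~\ref{asm:boundnorm}. Under your separation, that downstream step would need an extra remark --- e.g., that the GP's learned mean absorbs the deterministic offset $\xi(x)$, or that the Gaussian assumption holds only up to a bias of size $4c_m\rho^{-k}/(\rho-1)$, which decays geometrically in the interpolation degree $k$. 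You identified exactly the gap the paper glosses over; making that bias statement explicit would turn your version into a strictly more rigorous proof of what the paper actually needs.
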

		\begin{proof}
			The value of $\xi(x)$ does not obey any distribution so that the unknown term $d(x)+\xi(x)$ satisfies Assumption \ref{asm:noisedis}. Next, the inequality (\ref{eqn:Chebyshev}) from \cite{trefethen2019approximation} declares that the Chebyshev interpolants can be used in an arbitrary region while $\xi(x)$ in this region is bounded. Thus, $d(x)+\xi(x)$ is still bounded in the region, which completes this proof. 
		\end{proof}
		
		\subsubsection{Gaussian Processes}
		
		Gaussian processes (GP) is able to capture the unmodeled dynamics and further infer the dynamical information based on the prior knowledge. With GP, every element inside the finite subset $\{x_1,x_2,\cdots,x_k\}\in\mathcal{X}$ is associated with a normally distributed random variable. Thus, the prior and posterior model of GP obeys a joint Gaussian distribution over the function of these variables. More specifically, it is considered to describe this distribution of the target function $\hat{f}:\mathcal{X}\rightarrow \mathbb{R}$ by GP as 
		\begin{equation}
			\begin{aligned}
				\hat{f}\sim \mathcal{GP}(m(x),k(x,x^{\prime})),
			\end{aligned}
		\end{equation}
		
		\noindent where $m(x)$ is the mean function of this distribution and usually equals to zero in the prior model, $k(x,x^{\prime})$ is the kernel function to measure the similarity of the horizon states $x,x^{\prime}\in \mathcal{X}$. The GP model is synthesized by the hyper-parameter $\Omega$ which characterizes the value of distribution $\hat{f}$. A widely used kernel function is the isotropic squared exponential kernel
		
		\begin{equation}
			\begin{aligned}
				k^{SE}(x,x^{\prime})=\sigma_f^2 \exp (\sum^n_{j=1}\frac{(x_j-x_j^{\prime})^2}{-2l^2_j}),
			\end{aligned}
		\end{equation}
		
		\noindent where $\sigma_f$ is the signal variance and $l_j$ is the length scale. We are able to obtain a mean function in polynomial form as shown in our previous work.
		
		\begin{lemma}(Proposition 1 of \cite{huang21ccdc})
			\label{lem:polynomial_mean}
			A GP posterior model can be represented by self-defined polynomial mean functions $m(x_*)$ at the query state $x_*\in\mathbb{R}^n$.
		\end{lemma}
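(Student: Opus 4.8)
The plan is to begin from the closed-form expression for the GP posterior mean and show that its dependence on the query state $x_*$ collapses onto a finite family of kernel slices whose weights are fixed by the data. Given the finite training set $\{x_1,\dots,x_N\}\subset\mathcal{X}$ with observations collected in $y\in\R^N$, the posterior mean is
$$m(x_*)=k(x_*,X)^\top\big(K+\sigma_n^2 I\big)^{-1}y=\sum_{i=1}^N\alpha_i\,k(x_*,x_i),$$
where $K$ is the Gram matrix with entries $k(x_i,x_j)$, the vector $k(x_*,X)=[k(x_*,x_1),\dots,k(x_*,x_N)]^\top$ stacks the kernel slices, and $\alpha=(K+\sigma_n^2 I)^{-1}y\in\R^N$ is a constant independent of $x_*$. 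Thus all of the $x_*$-dependence is carried by the slices $k(x_*,x_i)$.

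First I would isolate the only non-polynomial ingredient. With the squared-exponential kernel each slice factorizes across coordinates,
$$k^{SE}(x_*,x_i)=\sigma_f^2\prod_{j=1}^n\exp\!\Big(\tfrac{(x_{*,j}-x_{i,j})^2}{-2l_j^2}\Big),$$
so that, treating $x_i$ and the hyper-parameters as fixed, every factor is an analytic univariate function of the single coordinate $x_{*,j}$. Since the exponential of a quadratic is entire, each factor is analytic on any Bernstein ellipse, so Lemma \ref{lm:Approximation} applies to it and returns a Chebyshev interpolant of user-chosen degree $k$ with the geometrically decaying remainder $4c_m\rho^{-k}/(\rho-1)$. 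This is precisely the freedom encoded in the phrase ``self-defined'' polynomial mean function.

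Next I would reassemble the pieces. Multiplying the $n$ univariate interpolants reproduces each slice $k^{SE}(x_*,x_i)$ as a multivariate polynomial in $x_*$ of total degree at most $nk$, and substituting these into the weighted sum gives $\sum_{i=1}^N\alpha_i P_k(x_*;x_i)$. A finite linear combination of polynomials is again a polynomial, so after collecting monomial coefficients this is a single polynomial mean function $m(x_*)$, representing the true posterior mean up to an error no larger than $\big(\sum_{i=1}^N|\alpha_i|\big)$ times the Chebyshev remainder from Lemma \ref{lm:Approximation}.

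The step I expect to be the main obstacle is turning this into a rigorous representation rather than a heuristic one: the exact posterior mean is genuinely non-polynomial, so the real content is that the polynomial constructed above matches it with a remainder that stays uniformly bounded on the concerned region. The delicate parts are verifying the analyticity-and-containment hypothesis of Lemma \ref{lm:Approximation} on the specific region $\mathcal{X}$ of interest, and propagating the per-slice interpolation error through the weight vector $\alpha$ so that the aggregate error remains controlled. Once these are established, the interpolation degree $k$ can be raised to meet any prescribed accuracy, which completes the representation.
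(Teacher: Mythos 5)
Your proof is sound in its essentials, but it takes a genuinely different route from the paper, which in fact offers no proof at all: Lemma \ref{lem:polynomial_mean} is imported verbatim as Proposition 1 of \cite{huang21ccdc}, where the polynomial mean is \emph{self-defined}, i.e., a parametric polynomial mean function built into the GP model itself and fitted during training (this is exactly what the numerical section reflects when it says a $4^{th}$ order polynomial mean function was ``generated after $400$ epochs''); the mismatch between that polynomial and the true posterior mean is then absorbed into the posterior variance and handled probabilistically downstream via $k_\delta\sigma_{d_\xi}(x)$ and the $(1-\delta)^m$ bounds of Lemma \ref{lem:polynomial_model}. You instead post-process the standard zero-mean posterior: expand it as $\sum_i \alpha_i k(x_*,x_i)$, exploit the coordinate-wise factorization and entirety of the SE kernel, and tensor Chebyshev interpolants (Lemma \ref{lm:Approximation}) on each slice, propagating the per-slice remainder through $\sum_i|\alpha_i|$. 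What your construction buys is a deterministic, uniform error bound on a compact box and no retraining; what the paper's cited construction buys is compatibility with the GP's own uncertainty quantification (the polynomial-versus-truth gap lives inside the variance that the later SOS conditions already account for through $\mathcal{D}(x)$), and a polynomial degree that is a design choice independent of the number of training points, whereas your tensor-product polynomial has degree up to $nk$ and a coefficient count that grows accordingly.

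Two caveats you should make explicit. First, the lemma as stated quantifies over $x_*\in\mathbb{R}^n$, but Chebyshev interpolation only gives you the bound $4c_m\rho^{-k}/(\rho-1)$ on a prescribed compact box mapped into $[-1,1]^n$; your representation is therefore regional, not global (the paper operates ``in a certain region'' throughout, so this is consistent, but it should be stated). Second, ``represented'' in your argument means ``approximated to arbitrary accuracy,'' so to use your polynomial in place of the posterior mean elsewhere in the paper you must fold the residual into the bounded unknown term $d(x)+\xi(x)$ and check that Assumptions \ref{asm:noisedis} and \ref{asm:boundnorm} still hold, exactly as Proposition 1 of this paper does for the Chebyshev remainder of $f$.
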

		
		Generally, the unknown terms in (\ref{eqn:approsys}) could be learned by GP model, and the mean function could be represented into a polynomial form. It generates a probabilistic bound of learned results as follows.
		
		\begin{lemma}(Theorem 1 of \cite{huang21ccdc})
			\label{lem:polynomial_model}
			Consider a partially unknown but asymptotically stable system $\dot{x}=f(x)+d(x)$, where $f(x)$ is a given nonlinear function and be approximated by $k^{th}$ degree Chebyshev interpolants $P_k(x)$ in a certain region, and $d(x)$ is an unknown noise signal which satisfies $(0, \sigma_n^2)$. Set $\delta \in (0,1)$. If $m$ measurements of the unknown term $d(x)+\{f(x)-P_k(x)\}$ are given, a polynomial dynamical system established toward the exact dynamics with probability greater or equal to $(1-\delta)^m$, the ROA of this probabilistic system also has the same probability bounds.
		\end{lemma}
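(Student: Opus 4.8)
The plan is to build the probability bound up from a single measurement and then propagate it through the independence of the data and the deterministic character of the region-of-attraction (ROA) computation. First I would invoke the Gaussian-process confidence machinery applied to the combined residual $d(x)+\{f(x)-P_k(x)\}$. Under Assumption~\ref{asm:boundnorm} this residual has bounded reproducing-kernel-Hilbert-space norm, so for a fixed query state the GP posterior furnishes a pointwise bound of the form $|\hat f(x_*)-\mu(x_*)|\le\beta\,\sigma(x_*)$, where $\mu$ and $\sigma$ are the posterior mean and standard deviation and $\beta=\beta(\delta)$ is selected so that a single draw of the zero-mean Gaussian residual (Assumption~\ref{asm:noisedis}) lies inside this tube with probability at least $1-\delta$. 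Lemma~\ref{lem:polynomial_mean} then lets me replace $\mu$ by a polynomial mean $m(x)$, so the learned model $\dot x=P_k(x)+m(x)$ is genuinely polynomial and, per measurement, tracks the exact dynamics $\dot x=f(x)+d(x)$ within the confidence tube with probability $\ge 1-\delta$.

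Second, I would aggregate the per-measurement statements. Because $d(x)$ is assumed to be an independent zero-mean Gaussian noise across all $m$ measurements (Assumption~\ref{asm:noisedis}), the $m$ confidence events are mutually independent, so the probability that they hold \emph{simultaneously} is the product $\prod_{i=1}^{m}(1-\delta)=(1-\delta)^m$. This is precisely the joint reliability with which the fitted polynomial system is driven toward the exact dynamics.

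Third, I would argue that the ROA inherits this bound verbatim. The ROA is a deterministic functional of the vector field: on the event of probability $\ge(1-\delta)^m$ where the polynomial system coincides with the true dynamics, every trajectory-level property computed from it --- in particular the set of initial conditions converging to the equilibrium --- is correct. Hence no probability is shed in passing from the dynamics to its ROA, and the bound $(1-\delta)^m$ carries over unchanged.

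The hard part will be the first step, namely making the single-measurement bound rigorous. One must pin down an admissible $\beta(\delta)$ from the Gaussian tail of the residual (equivalently, from a uniform information-gain bound in the RKHS setting) and verify that the polynomialization of the mean in Lemma~\ref{lem:polynomial_mean} is exact on the sample set rather than merely approximate, so that the substitution $\mu\mapsto m(x)$ does not erode the confidence. Once this is secured, the independence and ROA-inheritance steps reduce to bookkeeping, with the assumed asymptotic stability of $\dot x=f(x)+d(x)$ guaranteeing that a nonempty ROA exists to certify in the first place.
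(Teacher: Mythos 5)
First, a point of reference: the paper does not actually prove this lemma --- it is imported wholesale as Theorem 1 of \cite{huang21ccdc}, and the only trace of its argument here is the surrounding text, which defines the confidence tube $\mathcal{D}(x)$ of half-width $k_\delta\sigma_{d_\xi}(x)$ around the polynomial mean and assigns it probability bounds $[(1-\delta)^m,1]$. Measured against that intended argument, your first two steps are essentially right: a per-measurement Gaussian confidence interval (your $\beta(\delta)\,\sigma(x_*)$ playing the role of $k_\delta\sigma_{d_\xi}$), followed by a product over the $m$ measurements. One caution on the product step: to invoke independence you must phrase the events as ``the $i$-th measured residual lies within its interval,'' which are independent because the noise realizations are (Assumption \ref{asm:noisedis}). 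As you phrase them --- membership of the true function in the GP posterior tube, whose mean and variance depend on \emph{all} the data jointly --- the events are not obviously independent, and the clean product $(1-\delta)^m$ would need a separate justification.

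The genuine gap is your third step. On the probability-$(1-\delta)^m$ event, the learned polynomial system does \emph{not} ``coincide'' with the true dynamics; the event is only that the true dynamics lies somewhere inside the tube $\mathcal{D}(x)$ around the polynomial mean. From there, the observation that ``the ROA is a deterministic functional of the vector field'' buys you nothing: two vector fields, one lying in a tube around the other, can have very different regions of attraction, since the ROA is not a continuous set-valued functional of the dynamics, and equality of the two systems is exactly what you do not have. What the claim actually requires --- and what the paper relies on (see the Remark following (\ref{eqn:learnsys}), where the controller design is said to account for all $d\in\mathcal{D}(x)$) --- is a robustness step: the Lyapunov/barrier certificate defining the estimated ROA must be verified for \emph{every} dynamics consistent with the tube, so that on the confidence event it is in particular valid for the true system. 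With that robust certification the ROA inherits the bound $(1-\delta)^m$; with the coincidence argument as you state it, the inheritance step fails.
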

		
		Lemma \ref{lem:polynomial_model} gives an estimate of partially unknown dynamical system in polynomial form with probability bounds. Thus, the estimate of the unknown term $d(x)+\xi(x)$ satisfies
		
		\begin{equation}
			\begin{aligned}
				\mathcal{D}(x)=\{d\vert &m_{d_\xi}(x)-k_\delta \sigma_{d_\xi}(x)\leq d(x)+\xi(x)\\
				&\leq m_{d_\xi}(x)+k_\delta\sigma_{d_\xi}(x)\},
			\end{aligned}
		\end{equation}
		
		\noindent with probability bounds $[(1-\delta)^m,1]$, where $k_\delta$ is a design parameter to obtain the probability bounds $((1-\delta)^n,1)$, $\delta\in(0,1)$. Then, the system (\ref{eqn:approsys}) can be expressed into a probabilistic polynomial form as follows,
		\begin{equation}\label{eqn:learnsys}
			\begin{aligned}
				\dot{x}=P_k(x)+g(x)u(x)+d.
			\end{aligned}
		\end{equation}

		\begin{remark}
			To consider all the possible values of $d\in \mathcal{D}(x)$, we will show in the following section that the controller design takes into account the variance of (\ref{eqn:learnsys}).
		\end{remark}
		
		\subsection{Problem Formulation}
		
		We now introduce a continuous differentiable function $B(x):\mathbb{R}^n \rightarrow \mathbb{R}$ to encode a set $\mathcal{B}$ as,
		\begin{equation}
			\begin{aligned}
				\label{Def:B1}\mathcal{B}=\{x\in \mathcal{X}:B(x)\geq 0\}.
			\end{aligned}
		\end{equation}
		\noindent For safety-critical control, if $\forall x \in \partial \mathcal{B},~\frac{\partial B}{\partial x}\neq 0$ and there exists an extended class $\kappa_\infty$ function $\gamma$ ($\gamma(0)=0$ and strictly increasing) such that for the system (\ref{eqn:learnsys}), $B(x)$ satisfies
		\begin{equation}
			\begin{aligned}\label{eqn:relaxB}
				\exists\; u \; \text{s.t.} \;\frac{\partial{B(x)}}{\partial{x}}(P_k(x)+g(x)u+d)\geq 0,
			\end{aligned}
		\end{equation}
		\noindent where the state trajectory starting inside $\mathcal{B}$ will never access the region $\mathcal{X}\backslash\mathcal{B}$. The controller $u$ from (\ref{eqn:relaxB}) declares the system safety if one can find a set $\mathcal{B}$. Thus, we consider $\mathcal{B}$ as a safe region and the function $B(x)$ is a control barrier function (CBF). 
		
		A typical way to find the region of attraction (ROA) $\mathcal{R}$ is to use the sublevel set of a control Lyapunov function (CLF) $V(x):\mathcal{R}^n\rightarrow \mathcal{R}$:
		\begin{equation}
			\begin{aligned}
				\mathcal{R}=\{x\in\mathcal{X}\backslash0:V(x)> 0\}. 
			\end{aligned}
		\end{equation}
		\noindent In order to drive the state trajectory to the equilibrium point, as a continuously differential function, $V(x)$ satisfies
		\begin{equation}\label{eqn:Lyapunov}
			\begin{aligned}
				\exists \;u \; \text{s.t.}\; \frac{\partial V}{\partial x}(P_k(x)+g(x)u+d)+\zeta\leq 0,
			\end{aligned}
		\end{equation} 
		\noindent where $\zeta$ is a positive scalar to relax this condition. To find a controller guaranteeing safety and stability at the same time, i.e., (\ref{eqn:relaxB}) and (\ref{eqn:Lyapunov}) both hold, an intuitive way is to unify both CLF and CBF as follows,
		\begin{equation}
			\label{eqn:pre-controller}
			\begin{aligned}
				&  && \qquad u^*=  \underset{ u\in\mathbb{R}^+}{\text{argmax}\; J(u)+k_\delta \delta^2} \\
				&\text{s.t.} && -\frac{\partial V(x)}{\partial x}(P_k(x)+d)+\zeta \geq \frac{\partial V(x)}{\partial x}g(x)u(x),\\
				& && \frac{\partial B(x)}{\partial x}(P_k(x)+d)+\gamma(B(x)) \geq -\frac{\partial B(x)}{\partial x}g(x)u(x).
			\end{aligned}
		\end{equation}
		\noindent In this paper, we will compute an optimal controller which enables the largest estimate of ROA where both stabilization and safety constraints strictly hold.  
		
		\section{Main result}
		
		This section attempts to solve the safe stabilization problem described by (\ref{eqn:pre-controller}) for the system (\ref{eqn:sysstate}). Without losing generality,  we assume the equilibrium point of system (\ref{eqn:sysstate}) is the origin as follows.
		
		\begin{assumption}
			\label{asp:1}
			The origin ($x=0$) is a single stable equilibrium of (\ref{eqn:sysstate}). $\hfill\square$
		\end{assumption}

		\subsection{Safe and Stabilization Controller Investigation}\label{sec:controller}
		
		To deal with the nonnegativeness constraints in constructing SOS program, it is quite useful to introduce Positivestellensatz (P-satz) in the following lemma.
		
		\begin{lemma}\label{lem:psatz}
			(\cite{putinar1993positive}) For polynomials $\{a_i\}_{i=1}^m$, $\{b_j\}_{j=1}^n$ and $p$, define a set $\mathcal{B}=\{x\in\mathcal{X}:\{a_i(x)\}_{i=1}^m=0,\{b_j(x)\}_{j=1}^n\geq0\}$. Let $\mathcal{B}$ be compact. The condition $\forall x \in \mathcal{X}, p(x)\geq0$ holds if the following condition holds,
			
			\vspace{7pt}\centering{\hspace{32pt}
				$\left\{
				\begin{array}{l}
					\exists r_1,\dots, r_m \in \mathcal{P}, ~ s_1,\dots, s_n \in \mathcal{P}^{\text{SOS}}, \\
					p-\sum^{m}_{i=1}r_i a_i-\sum^{n}_{j=1}s_j b_j \in \mathcal{P}^{\text{SOS}}.
				\end{array} 
				\right. $}
			\hfill$\square$
		\end{lemma}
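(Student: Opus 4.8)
The plan is to recognize this as the \emph{sufficiency} (``easy'') half of Putinar's Positivstellensatz: it asserts that the existence of the algebraic certificate \emph{implies} nonnegativity of $p$ on the constraint set, which is a pointwise verification rather than an existence statement. Consequently the compactness hypothesis on $\mathcal{B}$ plays no role in \emph{this} direction; it is needed only for the far deeper converse, which is the part actually attributed to \cite{putinar1993positive} and which I would simply cite rather than reprove.

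First I would name the SOS remainder. Set $\sigma := p-\sum_{i=1}^{m} r_i a_i-\sum_{j=1}^{n} s_j b_j$, which by hypothesis belongs to $\mathcal{P}^{\text{SOS}}$, and rearrange to the polynomial identity
\[
p = \sum_{i=1}^{m} r_i a_i + \sum_{j=1}^{n} s_j b_j + \sigma ,
\]
valid at every point of $\mathbb{R}^n$ since it is an identity of polynomials.

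Next I would fix an arbitrary $x\in\mathcal{B}$ and evaluate the three groups of terms. By the definition of $\mathcal{B}$ we have $a_i(x)=0$ for every $i$, so the first sum vanishes identically, regardless of the (arbitrary, possibly sign-indefinite) multipliers $r_i$. For the second sum, $b_j(x)\geq 0$ holds by the definition of $\mathcal{B}$, while $s_j(x)\geq 0$ because each $s_j$ is a sum of squares and such polynomials are globally nonnegative, exactly as observed in (\ref{eqn:openning}); hence every product $s_j(x) b_j(x)\geq 0$. Finally $\sigma(x)\geq 0$ for the same SOS reason. Summing the three contributions gives $p(x)\geq 0$, and since $x\in\mathcal{B}$ was arbitrary we conclude $p\geq 0$ on $\mathcal{B}$. (The quantifier ``$\forall x\in\mathcal{X}$'' in the statement is best read as $\forall x\in\mathcal{B}$, since the certificate only constrains the sign of $p$ on the constraint set itself.)

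The honest assessment is that there is no real obstacle in this direction: the entire content is the global nonnegativity of SOS polynomials together with the vanishing of the equality-constraint multipliers on $\mathcal{B}$. All the genuine difficulty resides in the converse --- producing such a certificate from mere positivity on a compact, Archimedean set --- which is Putinar's theorem and lies outside what needs to be established here. I would therefore keep the argument to the short substitution above and flag that compactness is superfluous for the stated implication.
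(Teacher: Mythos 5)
Your proof is correct, but it is worth noting how it sits relative to the paper: the paper offers no proof of this lemma at all --- it is stated as a citation to \cite{putinar1993positive} and used as a black box. What you have supplied is a complete, elementary verification of the direction the paper actually states and uses (certificate $\Rightarrow$ nonnegativity): writing $p=\sum_i r_i a_i+\sum_j s_j b_j+\sigma$ with $\sigma\in\mathcal{P}^{\text{SOS}}$ and evaluating on $\mathcal{B}$, where the equality multipliers vanish and the SOS factors are pointwise nonnegative. This buys self-containedness at essentially no cost, and it also surfaces two points the paper glosses over: the quantifier in the statement should read $\forall x\in\mathcal{B}$ rather than $\forall x\in\mathcal{X}$ (the certificate says nothing about $p$ off the constraint set), and the compactness hypothesis is irrelevant to this implication --- it is needed only for the converse (Archimedean) direction, which is the actual content of Putinar's theorem and which the paper's surrounding prose (``any strictly positive polynomial $p$ is in the cone\dots'') conflates with the stated easy direction. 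Your flagging of both issues is accurate; just be aware that if the paper ever needed the converse (e.g.\ to claim its SOS programs are lossless rather than merely sufficient), your argument would not cover that, and the citation to Putinar would have to do the work.
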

		This lemma shows that any strictly positive polynomial $p$ is in the cone that generated by polynomials $\{a_i\}_{i=1}^m$ and $\{b_j\}_{j=1}^n$. Lemma \ref{lem:psatz} provides a useful perspective to satisfy the non-negativity constraint over the SOS programs. It will be adequately used in the following context.
		
		Control barrier functions are developed in this section to compute the optimal controller with a permissive region of attraction, where the system state is both stabilized and in the safe set. We will consider the safe stabilization problem described by (\ref{eqn:pre-controller}) for the system (\ref{eqn:sysstate}) and the learned one (\ref{eqn:learnsys}). Instead of relaxing the stabilization term with $\delta$ in (\ref{eqn:pre-controller}), we will use a bilinear search in SOSP to obtain the permissive ROA.
		
		\begin{theorem}\label{them:1}
			Given the learned control-affine system (\ref{eqn:learnsys}), provided that there exists two local SOS polynomials $s_1(x),s_2(x)$, a sublevel set of the polynomial Lyapunov function $V(x)$ of (\ref{eqn:learnsys}) $\mathcal{L}_V = \{\forall x\in\mathcal{L}_c, V(x)\leq c\}$, an initial barrier function $B(x)=c-V(x)$, the stabilization and safety of (\ref{eqn:learnsys}) can be guaranteed under the control input $u(x)$ generated by the following SOS program:
			\begin{equation}\small
				\begin{aligned}\label{eqn:them1}
					& && \qquad \underset{u(x)\in\mathcal{P};\;\; s_1(x),s_2(x)\in\mathcal{P}^{SOS};\;\; \epsilon\geq 0}{\max \quad \epsilon}\\
					&\text{s.t.} &&-\frac{\partial V(x)}{\partial x}(P_k(x)+g(x)u+d)-s_1(x)B(x) \in \mathcal{P}^{\text{SOS}},\\
					& && \frac{\partial B(x)}{\partial x}(P_k(x)+g(x)u+d)+(\alpha -s_2(x))B(x)-\epsilon \in \mathcal{P}^{\text{SOS}},
				\end{aligned}
			\end{equation}
			\noindent where $\epsilon$ denotes the maximum barrier constraint margin, $\alpha B(x)$ denotes an extended $\kappa$ function. 
		\end{theorem}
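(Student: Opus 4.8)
The plan is to read each of the two constraints in (\ref{eqn:them1}) as a Positivstellensatz certificate in the sense of Lemma \ref{lem:psatz}, and then to feed the resulting pointwise inequalities into the standard Lyapunov and control-barrier arguments so as to recover the stabilization condition (\ref{eqn:Lyapunov}) and the safety condition (\ref{eqn:relaxB}). A structural fact worth exploiting is that $B(x)=c-V(x)$ gives $\tfrac{\partial B}{\partial x}=-\tfrac{\partial V}{\partial x}$, so both constraints are assembled from the single quantity $-\tfrac{\partial V}{\partial x}(P_k+gu+d)$; moreover the safe set coincides with the Lyapunov sublevel set, $\mathcal{B}=\{x:B(x)\geq 0\}=\{x:V(x)\leq c\}=\mathcal{L}_V$. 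Hence it suffices to show that $u(x)$ makes $V$ decrease inside $\mathcal{L}_V$ and renders $\mathcal{L}_V$ forward invariant, which jointly certify that $\mathcal{L}_V$ is an invariant estimate of the ROA inside which the state stays safe.

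First I would handle the stabilization constraint. Since $s_1(x)\in\mathcal{P}^{\mathrm{SOS}}$ multiplies the single inequality generator $B(x)$, the first line of (\ref{eqn:them1}) is exactly the P-satz form of Lemma \ref{lem:psatz} with $p(x)=-\tfrac{\partial V}{\partial x}(P_k+gu+d)$, generator $b_1(x)=B(x)$ and SOS multiplier $s_1$. Membership in $\mathcal{P}^{\mathrm{SOS}}$ forces $-\tfrac{\partial V}{\partial x}(P_k+gu+d)\geq s_1(x)B(x)$ everywhere; restricting to $x\in\mathcal{B}$ and using $s_1(x)\geq 0$, $B(x)\geq 0$ yields $\tfrac{\partial V}{\partial x}(P_k+gu+d)\leq 0$ on $\mathcal{B}$, which is precisely (\ref{eqn:Lyapunov}). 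With $V$ positive definite and this derivative condition holding throughout the sublevel set, the classical Lyapunov theorem gives asymptotic stability of the origin with region of attraction containing $\mathcal{L}_V$.

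Next I would treat the safety constraint in the same way. The second line of (\ref{eqn:them1}) is the P-satz certificate for $p(x)=\tfrac{\partial B}{\partial x}(P_k+gu+d)+\alpha B(x)-\epsilon$ with generator $B(x)$ and SOS multiplier $s_2$, so it implies $\tfrac{\partial B}{\partial x}(P_k+gu+d)+\alpha B(x)\geq s_2(x)B(x)+\epsilon$ everywhere. On $\mathcal{B}$ the term $s_2(x)B(x)\geq 0$ drops out and we obtain $\tfrac{\partial B}{\partial x}(P_k+gu+d)+\alpha B(x)\geq \epsilon\geq 0$, i.e. the control-barrier inequality (\ref{eqn:relaxB}) with the extended class-$\kappa$ function $\gamma(B)=\alpha B$ and a strict margin $\epsilon$ on the boundary $\partial\mathcal{B}=\{B=0\}$. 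Applying Nagumo's theorem, equivalently the comparison lemma to $\dot B\geq -\alpha B$, the closed-loop trajectories of (\ref{eqn:learnsys}) starting in $\mathcal{B}$ remain in $\mathcal{B}$ for all time, so the state never enters $\mathcal{X}\setminus\mathcal{B}$ and safety is guaranteed; combined with the previous paragraph, $u(x)$ simultaneously stabilizes the origin and keeps the system safe inside $\mathcal{L}_V$.

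The main obstacle is the uncertain term $d$, which in (\ref{eqn:learnsys}) ranges over the confidence tube $\mathcal{D}(x)$ rather than being a fixed polynomial, whereas the certificates in (\ref{eqn:them1}) are written for a single $d$. To make the guarantees robust I would require both certificates to hold for the worst-case realization in $\mathcal{D}(x)$, equivalently by appending the interval bounds $m_{d_\xi}\pm k_\delta\sigma_{d_\xi}$ as extra inequality generators in the P-satz and carrying their SOS multipliers through the two expressions; Lemma \ref{lem:polynomial_model} then transfers the deterministic conclusion to the true partially unknown system (\ref{eqn:sysstate}) with the stated probability $(1-\delta)^m$. A secondary technical point is strictness: the P-satz delivers only $\tfrac{\partial V}{\partial x}(\cdot)\leq 0$, so to obtain genuine asymptotic rather than merely Lyapunov stability I would either add a small positive-definite slack to the first constraint or invoke LaSalle's principle to argue that the derivative vanishes only at the origin.
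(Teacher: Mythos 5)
Your proposal is correct and follows essentially the same route as the paper's proof: both invoke the Positivstellensatz of Lemma \ref{lem:psatz} to convert the two SOS constraints of (\ref{eqn:them1}) into the pointwise conditions (\ref{eqn:Lyapunov}) and (\ref{eqn:relaxB}) on the set $\mathcal{B}=\{B(x)=c-V(x)\geq 0\}=\mathcal{L}_V$, and then conclude simultaneous stabilization (Lyapunov decrease) and safety (forward invariance with $\gamma(B)=\alpha B$ and margin $\epsilon$). In fact your write-up is tighter than the paper's own argument, since you make the pointwise inequalities explicit and additionally patch two points the paper glosses over --- the worst-case treatment of $d\in\mathcal{D}(x)$ and the strictness needed for asymptotic (rather than merely Lyapunov) stability.
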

		\begin{proof}
			The initial control barrier function can be selected based on the given $V(x)$ in the way that $B(x)=c-V(x)$. It results in a safe region $\{\forall x\in \mathcal{L}_c,\, B(x)=c-V(x)\geq 0\}$, which satisfies (\ref{Def:B1}) and (\ref{eqn:relaxB}). From Lemma \ref{lem:psatz}, the constraints of (\ref{eqn:them1}) imply that (\ref{Def:B1}) and (\ref{eqn:relaxB}) hold in the sublevel set $\mathcal{L}_V$ by using two SOS polynomials $s_1(x) and s_2(x)$. From the SOS program (\ref{eqn:them1}), it generates a control input $u(x)$ such that (\ref{eqn:relaxB}) and (\ref{eqn:Lyapunov}) both hold. In other words, the region $\{\forall x\in \mathcal{L}_c,\, B(x)=c-V(x)\geq 0\}$ guarantees stabilization and safety simultaneously. The non-negative constant $\epsilon$ in the second constraint of (\ref{eqn:them1}) settles a factor in relaxing the control barrier function's constraint (\ref{eqn:relaxB}). Thus, when $\epsilon$ is being maximized, the constraint margin will be expanded in finding the control input $u$. The trajectory $\Psi(t,x_0)$ of any states $x_0\in\mathcal{L}_c$ are always driven to the equilibrium point, which completes the proof.
		\end{proof}	
		
		\subsection{The Largest Safe Region Estimation}
		
		\begin{figure*}[ht] 
			\centering
			\includegraphics[width=1.0\linewidth]{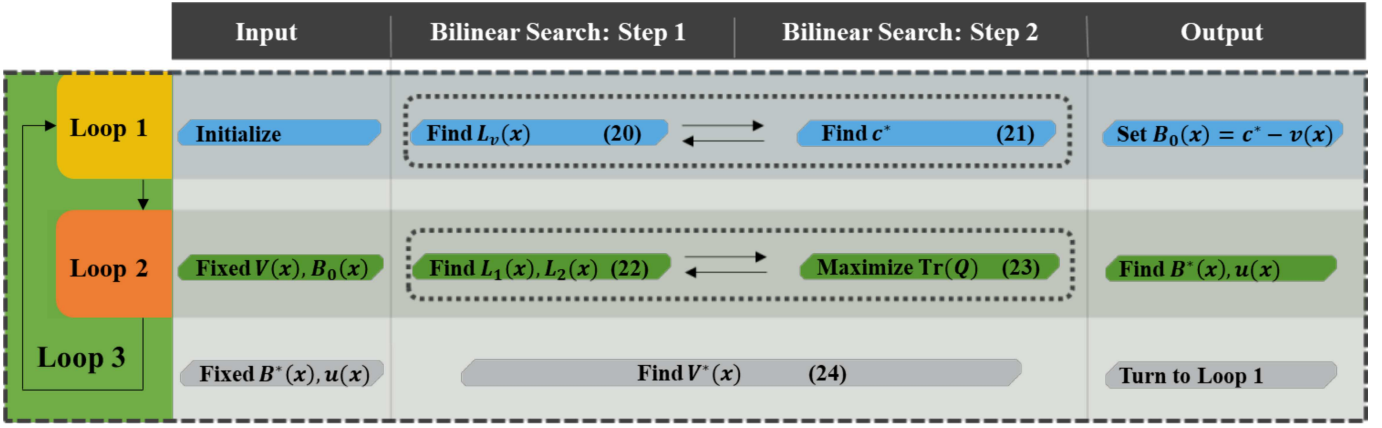}
			\caption{Workflow of the algorithm}
			\label{fig:algorithm}
		\end{figure*}

		The barrier function $B(x)$ can be re-written as a square matrix representation: $B(x)=z(x)^{\mathrm{T}}Qz(x)$, where $z(x)$ is a power vector and $Q$ is a coefficient matrix. The trace $Tr(\cdot)$ of $Q$ is regarded as an approximation of the barrier function volume \cite{wang2018permissive}. Thus, enlarging the barrier function certified ROA can be approximated by enlarging the trace of $Q$ as shown in the following result. 
		
		\begin{theorem}\label{them:2}
			Given the learned control-affine system (\ref{eqn:learnsys}), suppose the input $u(x)$ is fixed. If there exists a sublevel set of Lyapunov function $\mathcal{L}_V = \{x\in \mathcal{L}_0, V(x)\leq c\}$ and two given SOS polynomials $s_1(x),s_2(x)$ generated from (\ref{eqn:them1}), then an optimal control barrier function $B(x)$ can be computed as follows
			\begin{equation}\label{eqn:them2}\small
				\begin{aligned}
					& &&\qquad \qquad \quad \underset{B(x)\in\mathcal{P};\;\; s_1(x),s_2(x)\in\mathcal{P}^{SOS}}{Q^*=\max \;Tr(Q)} \\
					&\text{s.t.} && -\frac{\partial V(x)}{\partial x}(P_k(x)+g(x)u+d)-s_1(x)B(x) \in \mathcal{P}^{\text{SOS}},\\
					& && \frac{\partial B(x)}{\partial x}(P_k(x)+g(x)u+d)+(\alpha -s_2(x))B(x) \in \mathcal{P}^{\text{SOS}},
				\end{aligned}
			\end{equation}
			\noindent where $Q$ is the coefficient matrix of $B(x)$ in square matrix representation, $\alpha$ is a linear factor to construct the extended $\kappa$ function.
		\end{theorem}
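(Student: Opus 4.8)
The plan is to mirror the proof of Theorem~\ref{them:1}, reusing the Positivstellensatz certificate of Lemma~\ref{lem:psatz}, but with the roles of the decision variables reversed: the controller $u(x)$, the Lyapunov function $V(x)$, and the multipliers $s_1(x),s_2(x)$ are frozen at the values returned by (\ref{eqn:them1}), while the barrier function $B(x)$ becomes the free variable. First I would observe that, because $u$, $V$, $s_1$ and $s_2$ are fixed, every term in the two constraints of (\ref{eqn:them2}) is affine in the coefficients of $B(x)$: the stability constraint contains $B$ only through the linear term $-s_1(x)B(x)$, and the safety constraint contains $B$ only through $\frac{\partial B(x)}{\partial x}(P_k(x)+g(x)u+d)$ and $(\alpha-s_2(x))B(x)$, both of which are linear in $B$. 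Since the objective $Tr(Q)$ is linear in the entries of $Q$, the problem is a genuine SOS program, hence a convex semidefinite program that is feasible and solvable in polynomial time.

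Next I would establish feasibility together with the safety and stability guarantee. Feasibility follows from Theorem~\ref{them:1}: the initial choice $B(x)=c-V(x)$ with the same $s_1,s_2$ already satisfies both SOS constraints, so the feasible set is nonempty and a maximizer $Q^*$, equivalently $B^*(x)=z(x)^{\mathrm{T}}Q^*z(x)$, exists. By Lemma~\ref{lem:psatz}, membership of the first expression in $\mathcal{P}^{\text{SOS}}$ certifies the Lyapunov decrease condition (\ref{eqn:Lyapunov}) on the sublevel set $\mathcal{L}_V$, and membership of the second certifies the barrier condition (\ref{eqn:relaxB}) on $\mathcal{B}=\{x:B(x)\geq0\}$; thus any optimizer $B^*$ again encodes a region in which the learned system (\ref{eqn:learnsys}) is simultaneously stabilized and kept inside the safe set.

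Finally I would connect the objective to the size of the certified region. Writing $B(x)=z(x)^{\mathrm{T}}Qz(x)$, I would invoke the volume surrogate introduced just before the theorem (following \cite{wang2018permissive}): $Tr(Q)$ serves as a tractable proxy for the volume of the barrier-certified ROA, so maximizing $Tr(Q)$ over all admissible $B$ returns the largest such region compatible with the frozen certificate data. I expect the main obstacle to be precisely this last step, since the optimality is only with respect to the trace surrogate rather than the true Lebesgue volume of $\{x:B(x)\geq0\}$. I would therefore state the conclusion as optimality of the SOS program and emphasize that the fixed multipliers $s_1,s_2$ constrain the attainable $B^*$, which is exactly why the full procedure must alternate (\ref{eqn:them1}) and (\ref{eqn:them2}) in a bilinear search rather than solving a single convex program.
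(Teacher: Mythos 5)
Your proposal is correct and follows essentially the same route as the paper's own proof: invoke the Positivstellensatz of Lemma~\ref{lem:psatz} so that the first constraint of (\ref{eqn:them2}) certifies the Lyapunov decrease condition and the second certifies the barrier condition (\ref{eqn:relaxB}), then appeal to the $Tr(Q)$ volume surrogate of \cite{wang2018permissive} to interpret the maximization as enlarging the certified ROA. Your additions---the observation that the program is convex once $u$, $V$, $s_1$, $s_2$ are frozen, the explicit feasibility argument via $B(x)=c-V(x)$ inherited from Theorem~\ref{them:1}, and the honest caveat that optimality holds only with respect to the trace surrogate rather than true volume---are refinements the paper omits, but they do not change the underlying argument.
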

		\begin{proof}
			The dynamics of (\ref{eqn:learnsys}) is asymptotically stable with a fixed control input $u(x)$. Based on the P-satz from Lemma 4, the first constraint in (\ref{eqn:them2}) ensures that the derivative of the $V(x)$ in the compact set certified by $B(x)$ is non-decreasing, which guarantees the stabilization of (\ref{eqn:learnsys}). In addition, also followed by the P-satz, if the second constraint of (\ref{eqn:them2}) holds, it yields the condition (\ref{eqn:relaxB}), which implies that safety of (\ref{eqn:learnsys}) is guaranteed by the control input $u(x)$. Consequently, any state inside the safe region certified by $B^*(x)=Z(x)^{\mathrm{T}}Q^*Z(x)$ can never reach the unsafe region $\mathcal{X}\backslash\mathcal{B}^*$, while the region is also guaranteed to obtain the stabilization. Thus, we conclude that an optimal control barrier function $B^*(x)$ certifies the largest estimated ROA.
		\end{proof}	
		In practice, unsafe regions usually exist and have a big influence on controller design, e.g. the obstacles in autonomous driving. We describe the unsafe regions in the following form:
		\begin{equation}\label{eqn:unsaferegion}
			\begin{aligned}
				\mathcal{L}_{i=1,2,3,\dots}=\{\forall x \in\mathcal{X},m_{i=1,2,3,\dots}(x)\leq 0\}.
			\end{aligned}
		\end{equation}
		\noindent The following result reveals how we can find an optimal controller that maximize the ROA:
		\begin{theorem}
			Given the learned system (\ref{eqn:learnsys}) with a fixed $u(x)$ and a pre-defined Lyapunov sublevel set $\mathcal{L}_0=\{\forall x\in \mathcal{X}\backslash 0, 0< V(x)\leq c\}$, if there exists a control barrier function $B(x)$ that satisfies the following optimization,
			
			\begin{equation}\label{eqn:prop1}\small
				\begin{aligned}
					& && \quad \underset{B(x)\in\mathcal{P};\;\; s_1(x),s_2(x),n_1(x),n_2(x),n_3(x),\dots\in\mathcal{P}^{SOS}}{Q^*=\max \;Tr(Q)}\\
					&\text{s.t.} &&-\frac{\partial V(x)}{\partial x}(P_k(x)+g(x)u+d)-s_1(x)B(x)\in \mathcal{P}^{\text{SOS}},\\
					& && \frac{\partial B(x)}{\partial x}(P_k(x)+g(x)u+d)+(\alpha -s_2(x))B(x)\in \mathcal{P}^{\text{SOS}},\\
					& && -n_{i}(x)m_{i}(x)-B(x)\in \mathcal{P}^{\text{SOS}},
				\end{aligned}
			\end{equation}
			\noindent then the safety region $\mathcal{L}$ certified by the barrier function $B^*(x)=Z(x)^{\mathrm{T}}Q^*Z(x)$ can be maximized regarding the unsafe regions $\mathcal{L}_{i=1,2,3,\dots}$ in (\ref{eqn:unsaferegion}). 
		\end{theorem}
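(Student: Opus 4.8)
The plan is to mirror the proof of Theorem \ref{them:2} verbatim for the stabilization and safety guarantees, and then to append a dedicated argument that handles the new unsafe-region constraints. First I would invoke the Positivstellensatz of Lemma \ref{lem:psatz} with the single inequality generator $B(x)\ge 0$ to read off the first two memberships exactly as before. The certificate $-\frac{\partial V(x)}{\partial x}(P_k(x)+g(x)u+d)-s_1(x)B(x)\in\mathcal{P}^{\text{SOS}}$, together with $s_1(x)\in\mathcal{P}^{\text{SOS}}$, forces $\frac{\partial V(x)}{\partial x}(P_k+gu+d)\le 0$ on the certified set $\{x:B(x)\ge 0\}$, so $V$ is non-increasing there and (\ref{eqn:Lyapunov}) holds. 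The certificate containing $(\alpha-s_2(x))B(x)$ forces $\frac{\partial B(x)}{\partial x}(P_k+gu+d)\ge -\alpha B(x)$ on the same set, which is the barrier inequality (\ref{eqn:relaxB}) with the extended class-$\kappa$ function $\gamma(B)=\alpha B$. Together these render $\{B\ge 0\}$ forward invariant and contained in the ROA under the fixed $u(x)$.

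The genuinely new ingredient is the family $-n_i(x)m_i(x)-B(x)\in\mathcal{P}^{\text{SOS}}$, one certificate per unsafe region $\mathcal{L}_i$. I would read each through Lemma \ref{lem:psatz} as a proof that $-B(x)\ge 0$ holds throughout $\mathcal{L}_i$: since $n_i(x)$ is SOS and hence nonnegative, on $\mathcal{L}_i$ the product $n_i(x)m_i(x)$ carries a fixed sign, and the SOS membership delivers $B(x)\le -n_i(x)m_i(x)\le 0$ pointwise on $\mathcal{L}_i$. Consequently the open safe set $\{B>0\}$ is disjoint from every $\mathcal{L}_i$, the two meeting at most on the boundary $\{B=0\}$. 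Combined with the forward invariance established above, no trajectory initialised in the interior of the certified region can cross into any unsafe set described by (\ref{eqn:unsaferegion}).

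I would then close exactly as in Theorem \ref{them:2}. Writing $B(x)=Z(x)^{\mathrm{T}}QZ(x)$ and using $Tr(Q)$ as the volume surrogate introduced just before that theorem, the program maximises this surrogate subject to all three certificate families simultaneously, so the returned $B^*(x)=Z(x)^{\mathrm{T}}Q^*Z(x)$ certifies the largest safe-and-stable sublevel set that additionally steers clear of $\mathcal{L}_{i=1,2,3,\dots}$, which is the claimed conclusion.

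I expect the main obstacle to be the unsafe-region step, specifically pinning down the sign convention so that the generator extracted from $\mathcal{L}_i$ and the multiplier $n_i(x)$ combine to certify $B\le 0$ on $\mathcal{L}_i$ rather than an uninformative bound. This requires matching the membership $-n_i(x)m_i(x)-B(x)\in\mathcal{P}^{\text{SOS}}$ against the orientation of the defining inequality of $\mathcal{L}_i$ in (\ref{eqn:unsaferegion}) so that $n_i(x)m_i(x)\ge 0$ there; the argument only goes through under that orientation. A secondary caveat I would flag is that ``maximised'' here means optimal for the trace surrogate, not a strict volume maximum.
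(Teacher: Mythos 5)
The paper gives no proof to compare against: its ``proof'' of this theorem is a single sentence stating that the argument is similar to that of Theorem~\ref{them:2} and is omitted for space. Your proposal therefore supplies exactly the argument the paper gestures at, and its skeleton is the intended one: the first two SOS memberships are read through Lemma~\ref{lem:psatz} on the set $\{x: B(x)\ge 0\}$ to recover $\dot V\le 0$ (stability) and $\dot B\ge -\alpha B$ (the barrier condition with $\gamma(B)=\alpha B$), and the trace of $Q$ serves as the volume surrogate being maximized, with the honest caveat that ``maximized'' means optimal for that surrogate.

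The one substantive issue is the unsafe-region step, and here the caveat you flag at the end is not a minor orientation check --- it is a genuine sign inconsistency in the paper as printed, and your middle paragraph resolves it the wrong way. By (\ref{eqn:unsaferegion}), $\mathcal{L}_i=\{x: m_i(x)\le 0\}$ (consistent with the examples, where $q_i<0$ \emph{inside} the obstacle balls). Hence on $\mathcal{L}_i$ one has $n_i(x)m_i(x)\le 0$, so $-n_i(x)m_i(x)\ge 0$, and the membership $-n_i m_i - B\in\mathcal{P}^{\text{SOS}}$ only yields $B\le -n_i m_i$ with a \emph{nonnegative} right-hand side on $\mathcal{L}_i$ --- uninformative, contrary to your unqualified claim that $B(x)\le -n_i(x)m_i(x)\le 0$ there. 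Worse, the printed constraint forces $B\le 0$ wherever $m_i\ge 0$, i.e., \emph{outside} the obstacles, which would confine the certified safe set to the obstacles themselves. The certificate that actually proves $\{B>0\}\cap\mathcal{L}_i=\emptyset$ under the paper's convention is $n_i(x)m_i(x)-B(x)\in\mathcal{P}^{\text{SOS}}$ (equivalently, keep the printed constraint but define the unsafe regions as $\{m_i\ge 0\}$). So your proof is correct only after this sign repair, which you correctly anticipated as the crux but did not carry out; with that repair made explicit, the rest of your argument (forward invariance, disjointness from each $\mathcal{L}_i$, and the trace-surrogate maximization) closes the theorem.
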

		\begin{proof}
			The proof is similar to the one of Theorem \ref{them:2}. we omit it due to limited space.
		\end{proof}
		
		Note that the above optimization is based on a fixed Lyapunov function. To further enlarge the estimate of ROA, We aim to search for an optimal Lyapunov function in the following result.
		
		\begin{proposition}
			Given a CBF $B(x)$ of the system (\ref{eqn:learnsys}), there exists an optimal CLF inside if it satisfies
			\begin{equation*}\small
				\label{eqn:optimal_V}
				\begin{aligned}\small
					& && \qquad \qquad \underset{\substack{ V\in\mathbb{R}^+, ~u(x)\in \mathcal{P},~ L(x)\in\mathcal{P}^\text{SOS}} }{\text{max}} V_\gamma\\
					& \text{s.t.} && V-L_1 B(x)\in \mathcal{P}^{\text{SOS}}\\
					&  &&  -\frac{\partial V(x)}{\partial x}(P_k(x)+g(x)u+d) - L_2(x)B(x) - V_\gamma \in \mathcal{P}^{\text{SOS}}\\
					& && -n_i(x)m_i(x)+V(x)\in \mathcal{P}^{\text{SOS}}.\\
				\end{aligned}
			\end{equation*}
		\end{proposition}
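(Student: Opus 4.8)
The plan is to reuse the Positivstellensatz-based argument that carried Theorems \ref{them:1}--\ref{them:2}, now reading each SOS membership in the optimization as a nonnegativity certificate over the fixed safe set $\mathcal{B}=\{x\in\mathcal{X}:B(x)\ge 0\}$ determined by the given CBF. Since $B(x)$ is a valid CBF, $\mathcal{B}$ is compact and forward invariant, so Lemma \ref{lem:psatz} applies with the inequality generator $B(x)\ge 0$ augmented by the unsafe-region generators $m_i(x)$. Thus each of the three constraint families will be converted into a pointwise statement on $\mathcal{B}$, and $V_\gamma$ will be interpreted as a uniform decrease margin to be maximized.

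First I would treat the positivity constraint. With $L_1\in\mathcal{P}^{\text{SOS}}$, the membership $V-L_1B(x)\in\mathcal{P}^{\text{SOS}}$ is exactly the P-satz certificate for $V(x)\ge 0$ on $\mathcal{B}$; together with $V_\gamma\in\mathbb{R}^+$ enforcing strict decrease away from the origin, this recovers the CLF positivity requirement $V(x)>0$ on $\mathcal{R}\subseteq\mathcal{B}$, so every sublevel set of $V$ contained in $\mathcal{B}$ is an admissible Lyapunov level set. Next, reading the decrease constraint $-\frac{\partial V(x)}{\partial x}(P_k(x)+g(x)u+d)-L_2(x)B(x)-V_\gamma\in\mathcal{P}^{\text{SOS}}$ through Lemma \ref{lem:psatz} with $L_2\in\mathcal{P}^{\text{SOS}}$ yields, for every $x\in\mathcal{B}$ and every admissible $d\in\mathcal{D}(x)$,
\[
\frac{\partial V(x)}{\partial x}\bigl(P_k(x)+g(x)u(x)+d\bigr)\le -V_\gamma .
\]
This is precisely the decrease condition (\ref{eqn:Lyapunov}) with relaxation $\zeta=V_\gamma>0$, so under the synthesized $u(x)$ the closed-loop trajectories of (\ref{eqn:learnsys}) are driven to the origin while remaining inside $\mathcal{B}$, and $V$ is a genuine CLF for the learned system. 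Finally, the family $-n_i(x)m_i(x)+V(x)\in\mathcal{P}^{\text{SOS}}$ with $n_i\in\mathcal{P}^{\text{SOS}}$ is the P-satz certificate tying $V$ to each unsafe generator $m_i$ from (\ref{eqn:unsaferegion}), excluding the obstacles from the certified region exactly as the analogous constraint did in the proof of Theorem 3.

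For optimality I would argue that, because $V_\gamma$ uniformly lower-bounds the guaranteed contraction of $V$ on $\mathcal{B}$, maximizing $V_\gamma$ selects, among all CLFs compatible with the fixed $B(x)$ and the unsafe sets, the one with the strongest decrease and hence the largest certified invariant sublevel set inside $\mathcal{B}$. Combined with the CBF-enlargement step of Theorem \ref{them:2}, alternating the two optimizations then produces the further-enlarged estimate of the ROA announced before the statement.

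The hard part will be the bilinearity: the product $\frac{\partial V(x)}{\partial x}g(x)u(x)$ in the decrease constraint couples the two decision variables $V(x)$ and $u(x)$, so the feasible set is non-convex and the problem is not a single semidefinite program. As in Theorem \ref{them:1}, I expect to resolve this by a bilinear alternating search---fixing $u$ to solve for $(V,V_\gamma,L_1,L_2,n_i)$ as an SOS program, then fixing $V$ to re-optimize $u$---and to check that each alternation preserves all three P-satz certificates, so that the stability, safety, and obstacle-avoidance guarantees remain intact at convergence.
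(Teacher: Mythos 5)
Your proposal proves only half of the statement, and the half it omits is where the paper's single concrete idea lives. What you establish is soundness: \emph{assuming} the program has a feasible point, the P-satz reading of the constraints shows that any such point yields a function that is nonnegative on $\mathcal{B}$ and decreases with margin $V_\gamma$ under the synthesized input --- essentially a replay of the arguments of Theorems \ref{them:1} and \ref{them:2}. But the proposition asserts that, \emph{given} the CBF $B(x)$, an optimal CLF exists inside the certified region, and maximizing $V_\gamma$ over a possibly empty feasible set establishes nothing. The paper's proof is aimed precisely at this point: it exhibits a feasibility witness built from the given CBF by setting $V(x) = c - B(x)$, so that the CBF-certified region $\{x : B(x) > 0\}$ becomes a sublevel set of $V$, and then uses the derivative identity $\frac{\partial V(x)}{\partial x}\dot{x} = -\frac{\partial B(x)}{\partial x}\dot{x}$ to flip the CBF inequality (\ref{eqn:relaxB}) into the Lyapunov decrease condition $\frac{\partial V}{\partial x}\dot{x} \le 0$ on that set; P-satz on the resulting compact set then supplies the auxiliary multipliers, so the program is feasible and the maximization of $V_\gamma$ is well posed. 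This construction --- the CLF as the complement of the given CBF --- is entirely absent from your argument, and without it (or some other feasibility witness) the existence claim remains unproved.

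Two smaller mismatches. First, your pointwise reading of the obstacle constraint does not go through: $-n_i(x)m_i(x) + V(x) \in \mathcal{P}^{\text{SOS}}$ gives $V(x) \ge n_i(x)m_i(x)$, and on the unsafe set $\{m_i(x) \le 0\}$ of (\ref{eqn:unsaferegion}) the right-hand side is nonpositive, so the bound is vacuous; this constraint only acquires meaning through the substitution $V = c - B$, under which it matches (up to the constant $c$) the corresponding constraint of Theorem 3 on $-B$ --- the sign flip from $-B(x)$ there to $+V(x)$ here is itself a fingerprint of the paper's intended construction. Second, $V - L_1 B(x) \in \mathcal{P}^{\text{SOS}}$ certifies only $V \ge 0$ on $\mathcal{B}$, not the strict positivity you claim to recover. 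Your closing discussion of bilinearity and alternation is accurate as a description of how the optimization is solved, but it belongs to the algorithm of Section IV (Loops 1--3), not to the proof obligation of this proposition, and the paper's proof accordingly does not address it.
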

		
		\begin{proof}
			The ROA certified by an optimal CBF $\mathcal{L} = \{x\in\mathcal{X} \vert B(x)>0\}$ can be rewritten as $V(x)=c-B(x)$. According to P-satz, we can declare that inside the compact set $\mathcal{L}$, there always exist some auxiliary SOS polynomial such that the first and second constraint can be established, which can allow the sublevel set of $V(x)$ to generate itself inside a given safe region and without enter any unsafe regions. Note that, the time derivative of $B(x)$ is 
			
			$$\frac{\partial V(x)}{\partial x} \dot{x}=-\frac{\partial B(x)}{\partial x}\dot{x}$$ which is always negative in the sublevel set of $V(x)$ such that allow us to maximum the decision variable $V_\gamma$ to search for an optimal $V(x)$ with the largest derivative margin.
		\end{proof}
		
		\section{SOS-based Algorithm Development}
		An optimal Lyapunov-Barrier alteration algorithm is proposed to synthesis the controller $u(x)$ and the corresponding estimated ROA.  

		\subsection{Loop 1: Search for the maximum ROA with a given $V(x)$}
		
		Specify a Lyapunov function $V(x)$, and find its maximum sublevel set $c^*$ by using a bilinear search.
		
		\subsubsection{Step 1: Find an SOS polynomial $L(x)$} 
		Given a fixed $c_0$, then search for a feasible SOS polynomial $L(x)$,
		
		\begin{equation}
			\label{eqn:lp1-1}
			\begin{aligned}
				-\frac{\partial V(x)}{\partial x}\dot{x} - L(x)(c_0-V(x)) \in \mathcal{P}^{\text{SOS}}.
			\end{aligned}
		\end{equation}
		
		\subsubsection{Step 2: Find the maximum sublevel set of $V(x)$} 
		\begin{equation}
			\label{eqn:vcopt}
			\begin{aligned}
				& c^* = && \qquad \qquad \underset{\substack{ c\in\mathbb{R}^+, ~u(x)\in \mathcal{P},~ L(x)\in\mathcal{P}^\text{SOS}} }{\text{max}} c\\
				& \text{s.t.} &&  -\frac{\partial V(x)}{\partial x}\dot{x} - L(x)(c-V(x)) \in \mathcal{P}^{\text{SOS}}.
			\end{aligned}
		\end{equation}
		
		The two steps in Loop 1 can be repeated sequentially until $c^*$ stop increasing. Based on this bilinear search, we can set an initial barrier certificate as $\bar{h}(x)=c^*-V(x)$, which satisfies the definitions (\ref{Def:B1}) and (\ref{eqn:relaxB}), simultaneously.
		
		\subsection{Loop 2: Search for an optimal $B(x)$}
		
		In this loop, the P-satz and bilinear search method are employed to compute a permissive control barrier function, where it includes two steps:
		
		\subsubsection{Step 1: Fix $h(x)$, search for $u(x)$, $L_1(x)$, and $L_2(x)$}
		Using $h(x)$ obtained from previous step, we can search for feasible $u(x)$, $L_1(x)$, and $L_2(x)$.
		\begin{equation}
			\label{eqn:lp2-1}
			\begin{aligned}
				&& \underset{ \substack{\epsilon\geq 0,~u(x)\in\mathcal{P},~L_1(x),L_2(x)\in\mathcal{P}^{\text{SOS}}}}{\text{max}} \quad
				\epsilon \\
				&  \text{s.t.}
				& \hspace{-0.2in} -\frac{\partial V(x)}{\partial x}\dot{x} - L_1(x) h(x)&\in \mathcal{P}^{\text{SOS}},\\
				&
				& \hspace{-0.2in} \frac{\partial h(x)}{\partial x}\dot{x} - L_2(x) h(x) - \epsilon &\in \mathcal{P}^{\text{SOS}}.
			\end{aligned}
		\end{equation}
		
		\subsubsection{Step 2: Fix $u(x)$, $L_1(x)$, and $L_2(x)$, search for $h(x)$}
		
		In this step, the target of SOSP, $h(x)$, can be rewritten into the square matrix representation (SMR) form as $h(x)=Z(x)^TQZ(x)$, such that we can approximate the volume of ROA as the trace value of the coefficient matrix $Q$, for more details, we kindly recommend interested reader to \cite{wang2018permissive}. Based on the given $u(x)$, $L_1(x)$, $L_2(x)$, and unsafe regions $m_i(x),i\in\mathcal{M}$, an appropriate $h(x)$ can be computed as
		\begin{equation}
			\label{eqn:lp2-2}
			\begin{aligned}
				& \underset{ \substack{h(x)\in\mathcal{P}, L_1(x),L_2(x),n_i(x)\in\mathcal{P}^\text{SOS},i\in \mathcal{M}}}{\text{max}} &\quad
				\text{trace}(Q) \\
				&   \text{s.t.} -\frac{\partial V(x)}{\partial x}\dot{x} - L_1(x) h(x)&\in \mathcal{P}^{\text{SOS}},\\
				& \frac{\partial h(x)}{\partial x}\dot{x} - L_2(x) h(x) &\in \mathcal{P}^{\text{SOS}},\\
				& -n_i(x)m_i(x)-h(x) &\in \mathcal{P}^{\text{SOS}}.
			\end{aligned}
		\end{equation}
		
		\subsection{Loop 3: Search for an optimal $V(x)$} 
        In this loop, we attempt to find an optimal CLF $V(x)$ and further enlarge the estimated ROA. The optimal $V(x)$ could be found by using the following optimization,
		\begin{equation}
			\label{eqn:lp3-1}
			\begin{aligned}
				& \underset{ \substack{\epsilon_V\geq 0,,L_{V_1}(x),L_{V_2}(x)\in\mathcal{P}^{\text{SOS}}}}{\text{max}} & \epsilon_V \\
				\text{s.t.} & \qquad V - L_{V_1}(x)B(x)&\in \mathcal{P}^{\text{SOS}},\\
				& -\frac{\partial V(x)}{\partial x}\dot{x} -L_{V_2}B(x) - \epsilon_V &\in \mathcal{P}^{\text{SOS}}. 
			\end{aligned}
		\end{equation}
		
		\noindent Once we obtain the optimal solution $V^*(x)$, it can be transmitted to Loop 1, which begins another iteration of computation. 

		\section{Numerical Examples}
		\subsection{Example 1: A 2D Nonlinear System}
		Consider a two-dimensional system as follows,
		\begin{eqnarray}\label{demo:2D}\small
			\begin{aligned}
				\begin{bmatrix} \dot{x}_1 \\ \dot{x}_2 \end{bmatrix} = 
				\begin{bmatrix}
					-x_{1}+x_{2}+u_1\\	x_{1}^{2}x_{2}+1-\sqrt{\vert\exp(x_1)\cos(x_1)\vert}+u_2+d(x)
				\end{bmatrix},
			\end{aligned}
		\end{eqnarray}
		
		\noindent where $[x_1,x_2]^{\mathrm{T}}\in \mathbb{R}^2$ and $[u_1,u_2]^{\mathrm{T}}\in\mathbb{R}^2$ are the state and control input. Some unsafe regions are defined as
		\begin{equation}\label{eqn:unsafe_region_2d}\small
			\begin{aligned}
				q_1(x) &= (x_1+4)^2+(x_2-5)^2-4, \\
				q_2(x) &= (x_1+0)^2+(x_2+5)^2-4, \\
				q_3(x) &= (x_1-5)^2+(x_2-0)^2-5.   
			\end{aligned}
		\end{equation}
		
		The non-polynomial term in (\ref{demo:2D}) can be approximated by Chebyshev interpolants of degree $4$ in $[-2,2]$. An approximated system in the form of (\ref{eqn:learnsys}) can be obtained based on the GP of $d_\xi(x)$. We selected a mean function $m(x)=0$ and squared-exponential (SE) kernel defined as $k(x,x^{\prime})=\sigma_f^2 \exp{(-\frac{(x - x^{\prime})^2}{2l^2})}$, where $\sigma_f = \exp{(0.1)}$ is signal covariance and $l = \exp{(0.2)}$ is length scale. According to Lemma \ref{lem:polynomial_mean}, we generated a $4^{th}$ order polynomial mean function after $400$ epochs. The comparison of the results from the default mean function and the polynomial one is given in Fig. \ref{fig: the demo2d_comparison}.
		
		First, we use the collected trajectory information starting at $(-0.5,0.2)$ to construct the prior dataset, and then validate this generated polynomial mean function with the trajectory information started from $(-0.4,0.4)$. Both of these processes are sampled within $30s$ and the time step is $0.1s$. The corresponding root-mean-square errors are $9.03\times 10^{-5}$ and $2.89\times 10^{-4}$, respectively.
		
		\begin{figure}[ht] 
			\centering
			\includegraphics[width=0.9\linewidth]{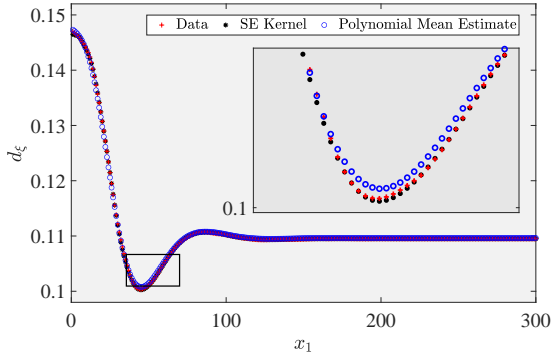}
			\caption{Example 1: Comparison of the prediction based on SE kernel and $4^{th}$ degree polynomial mean function. The green filled part denotes the variance of $[-\sigma,\sigma]$, the filled black point denotes the predictions and the red plus denotes the exact values. }
			\label{fig: the demo2d_comparison}
		\end{figure}
		
		\begin{figure}[ht] 
			\centering
			\includegraphics[width=0.9\linewidth]{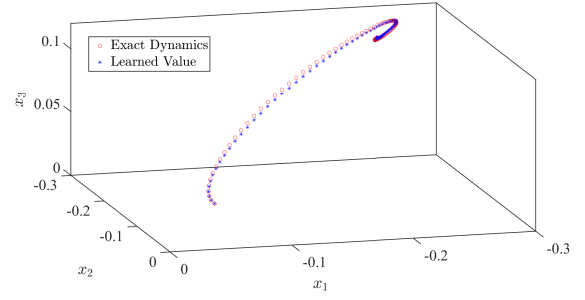}
			\caption{Example 1: Comparison of the learned value of the polynomial mean estimate and the observed values of these measurements. The former is in the form of the blue star while the latter is depicted by the red circle. }
			\label{fig:demo2d_comparison}
		\end{figure}

		\begin{figure}[ht] 
			\centering
			\includegraphics[width=0.9\linewidth]{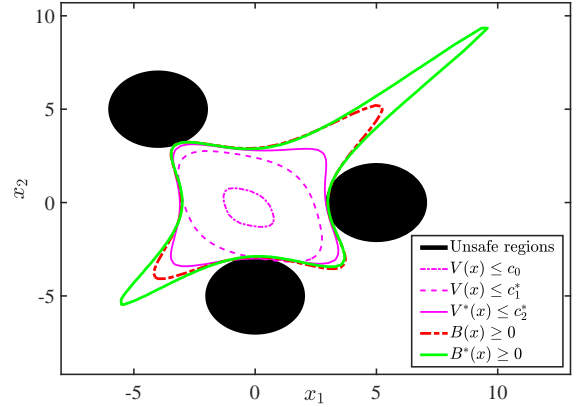}
			\caption{Example 1: Results comparison. The solid black ellipses depict three unsafe regions. The first three different magenta lines starting from the origin depict the initial sublevel set $c_0$, maximum sublevel set $c^*_1$ of $V(x)$ and the maximum sublevel set $c_2^*$ of $V^*(x)$. Generated from $V(x)\leq c^*_1$, the optimal barrier function $B(x)$ is drawn in the red dashed shape, while the optimal barrier function $B^*(x)$ derived from $V^*(x)$ (the proposed optimal Lyapunov-Barrier alteration algorithm) is circled in the solid green line.}
			\label{fig:demo_2d_cbf}
		\end{figure}

		Followed by the algorithm in Figure \ref{fig:algorithm}, the results in Fig. \ref{fig:demo_2d_cbf} display a comparison of safe stabilization estimation based on a Lyapunov-Barrier alteration method. In this case, it is clear that the sublevel set of $V(x)$ is included in $B^*(x)$ certified ROA, and the ROA certified by $\bar{B}^*(x)$ is obviously larger than the original $B^*(x)$.
		
		\subsection{Example 2: A 3D Nonlinear System}
		Consider a three-dimensional system as follows,
		\begin{eqnarray}\label{demo:3D}\footnotesize
			\begin{aligned}
				\begin{bmatrix} \dot{x}_1 \\ \dot{x}_2\\\dot{x}_3 \end{bmatrix} = 
				\begin{bmatrix}
					-x_{1}^2-\cos{(x_{1}^2)}\sin{(x_{1})}+u_1(x)+d_1(x)\\	
					-x_{2}-x_{1}^3x_{2}+u_2(x)\\
					-x_{1}^2x_{3}+1-\sqrt{|\exp{(x_{1})}\cos{(x_{1})}}+u_3(x)+d_3(x)
				\end{bmatrix}.
			\end{aligned}   
		\end{eqnarray}
		
		\noindent The unsafe regions are given as
		\begin{equation}\label{eqn:unsafe_region_3d}\small
			\begin{aligned}
				q_1(x) &= (x_1+4)^2+(x_2+4)^2+(x_3-4)^2-4, \\
				q_2(x) &= (x_1-0)^2+(x_2-4)^2+(x_3+0)^2-4, \\
				q_3(x) &= (x_1-4)^2+(x_2-0)^2+(x_3+4)^2-6.          
			\end{aligned}
		\end{equation}
		
		Nonlinear terms in (\ref{demo:3D}) are approximated by the Chebyshev interpolants in $[-5,5]\times [-5,5]\times [-5,5]$. The prior GP model of $d_1(x)$ and $d_3(x)$ share the same hyperparameters, including $m(x)=0$, $\sigma_f = 0.1$ and $l = 0.2$ of a SE kernel. We use the trajectory information starting from $(-0.1,0.1,0.1)$ to start this GP and validate these optimal hyperparameters with the trajectory information of $(-0.1,-0.2,0.1)$. Both of these processes are sampled within $30s$ and the sample time step is $0.05s$. After running over $1000$ epochs, the root-mean-square errors of among the SE kernel based value and the $4^{th}$ degree polynomial to the exact values are $0.00500$ and $0.00633$, respectively.
		
		The results in Fig. \ref{fig:demo_3d_Bar} show a comparison of estimations based on the Lyapunov function method and the proposed method. For this case, we can see that the sublevel set of the Lyapunov function is overlapping with the barrier function certified ROA (by using the proposed method). Again, the estimation of our method is significantly larger than the method of optimal Lyapunov function.
		
		\begin{figure}[ht] 
			\centering
			\includegraphics[width=1\linewidth]{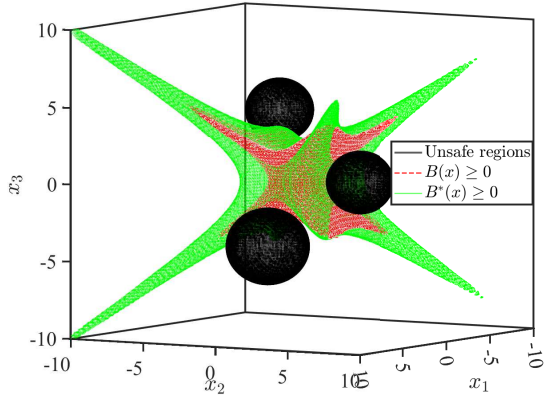}
			\caption{Example 2: Comparison of the CBF certified ROA $\mathcal{L}_B$ and $\mathcal{L}_{B^*}$ generated by the proposed method. The one in the red dashed line is generated by the maximum sublevel set of the optimal barrier function $B(x)$, while the one in the green solid line is generated by the maximum sublevel set of $B^*(x)$ from (\ref{eqn:lp3-1}) (the proposed optimal Lyapunov-Barrier alteration algorithm). The unsafe regions are depicted in the solid black balls.}
			\label{fig:demo_3d_Bar}
		\end{figure}
		
		\section{Conclusion}
		
        For partially unknown nonlinear systems, we first reconstruct a learned control affine system through Gaussian Processes and Chebyshev interpolants. Sufficient conditions based on sum-of-squares programs are proposed for the existence of a feasible controller such that the safety and stabilization of learned systems can both be guaranteed. Second, solvable conditions have proposed for pursuing an optimal control Lyapunov barrier function without and with considering pre-defined unsafe regions. Finally, an optimal Lyapunov-Barrier alteration algorithm is developed to compute the control input such that the estimated ROA can be maximized. Two numerical examples demonstrate that a significantly larger ROA can be obtained by the proposed method.


\end{document}